\theoremstyle{plain}
\newtheorem{thm}{Theorem}
\theoremstyle{definition}
\newtheorem{example}{Example}
\newtheorem*{pf}{Proof}
\newtheorem*{remark}{Remark}
\newcommand{\real}{\mathbb{R}}
\def\ps@pprintTitle{%
 \let\@oddhead\@empty
 \let\@evenhead\@empty
 \def\@oddfoot{}%
 \let\@evenfoot\@oddfoot}
\begin{document}

\begin{frontmatter}

\title{A Stochastic Binary Opinion Model: Opinion Dominance vs. Balance}
\author[1]{Serap Tay Stamoulas\corref{cor1}%
\fnref{fn1}}
\ead{serap.taystamoulas@dicle.edu.tr}

\author[2]{Muruhan Rathinam\fnref{fn2}}
\ead{muruhan@umbc.edu}  

\cortext[cor1]{Corresponding author}


\affiliation[1]{organization={Department of Mathematics, Faculty of Science, Dicle University},
city={Diyarbak\i r},
postcode={21280},
country={Turkey}}

\affiliation[2]{organization={Department of Mathematics and Statistics, University of Maryland, Baltimore County},
city={Baltimore, MD},
postcode={21250},
country={USA}}

\begin{abstract}
We propose and study a stochastic binary opinion model where agents in a group are
considered to hold an opinion of 0 or 1 at each moment. An agent in the group
updates his/her opinion based on the group's opinion configuration and his/her
\emph{personality}. Considering the number of agents with opinion 1 as a
continuous time Markov process, we analyze the long-term probabilities for
large population size in relation to the personalities of the group. In
particular, we focus on the question of ``balance'' where both opinions are
present in nearly equal numbers as opposed to ``dominance'' where one opinion is present in a greater number. 
\end{abstract}

\begin{keyword}
stochastic binary opinion; opinion dynamics; opinion dominance; balance of opinions
\end{keyword}

\end{frontmatter}



\section{Introduction}
The study of opinion dynamics focuses on modeling the decision-making process in
multi-agent systems. 
 It is natural to assume that an agent's decision-making process is influenced by the information received from the society. The influence of society could be considered in the form of agent's interactions with his/her  ``neighbors'' where neighbors may include every other agent in the society \cite{mobilia2007role,nyczka2013anticonformity}, may be determined based on the agent's opinion  \cite{stamoulas2018convergence,blondel2010continuous,ceragioli2012continuous} or based on a given communication graph independent of the opinions \cite{degroot1974reaching,jia2015opinion,friedkin1990social,bartolozzi2005stochastic,holyst2000phase,yildiz2013binary,sznajd2000opinion}.  
The set of possible opinions of a given 
agent at a given time may be regarded as a finite discrete set without any
additional structure, the simplest example being a binary set \cite{mobilia2007role,holyst2000phase,yildiz2013binary,sznajd2000opinion,san2005binary,galam2013modeling,zanette2006opinion,lyst2002social}
or alternatively as a continuum of values in $\real^d$ \cite{stamoulas2018convergence,blondel2010continuous,hegselmann2002opinion,weisbuch2002meet}. 

The celebrated voter model \cite{liggett1999stochastic} provides a basic updating rule for an agent holding an opinion from a binary set. Based on this updating rule, at each time step an agent will update his/her opinion by conforming to the majority opinion of his/her neighbors. Here, neighbors of an agent are defined through a network structure.  It is important to note that this opinion updating rule assumes that agents conform to the neighbor(s) and hence it does not take agents' personalities into account. In this respect, \cite{mobilia2007role,yildiz2013binary,galam2013modeling,schneider2004influence,sznajd2011phase,de2005spontaneous, javarone2014social,khalil2018zealots} and references therein include agents with \emph{personalities}  who can choose not to conform to the majority opinion of their neighbors and study the effects of such agents on the group's limiting decision configuration.
The presence of  \emph{stubborn} agents who do not  change their opinion is studied in \cite{yildiz2013binary,galam2013modeling,mobilia2007role,khalil2018zealots}. \emph{Independent} agents who change their opinions independent of the interactions are considered in \cite{sznajd2011phase}.
\cite{galam2013modeling,schneider2004influence,de2005spontaneous} study  \emph{contrarian} agents who adopt the opposite of the leading opinion with a certain probability.  
In our model, we capture different personality traits through a monotonic \emph{conformity} function and a \emph{spontaneity} coefficient.

It is natural to focus on two possible extreme outcomes:
balance of opinions where both opinions are present in nearly equal numbers and consensus where all agents agree. In reality, the situation is not  ``black
or white''; for instance, one may find that in the long-term, one opinion is likely
to be held by say $70\%$ of the agents resulting in dominance of one opinion. 
The idea of balance of opinions and dominance of opinions are explored in \cite{galam2013modeling,mobilia2007role,sznajd2000opinion} in discrete time setting in relation to the personalities of agents. 

In this study, we propose a continuous time stochastic binary opinion model (say $0$ or $1$) for a group in which agents
are considered to have personalities defined by a monotonic conformity function and a
spontaneity coefficient. An agent's personality determines the effect of
social influence on the agent (e.g., conformists, rebellious). Moreover, in our model, contrary to the pair-wise
interaction of agents that are defined to be neighbors, agents are considered
to be informed on the distribution of opinions in the entire group at each time
$t$. We investigate various personality traits and their effect on the group's limiting behavior for a large number of agents. 
In particular, the personality traits that lead to dominance of one opinion is our main interest.
We note that, our model can be thought of as the result of a situation 
where agents do not change their mind after one (pairwise or group) interaction, but rather after 
several interactions. In this case, assuming all agents can interact with all 
others, in a large population an agent interacts with
sufficiently many others before changing their mind and the sample from the 
sufficiently many can be taken as a good approximation of sampling the entire
population. 

We also assume (as is done in other models) that there is no ``natural bias'' towards one of the opinions. 
As an example, if the opinion is about whether the earth is flat or not, one
would expect that in an informed society, agents are more likely to believe
the truth. We are focused on the long-term probability distribution for 
the number of agents with opinion $1$ when the number $N$ of total agents is
very large. In particular, we study the effects of agents' personalities on balance 
and dominance of the opinions. We found that the shape of the conformity function plays an important role.

The paper is organized as follows. In Section \ref{sec:model} we introduce our
binary model and focus on a \emph{homogeneous} group. In section \ref{sec:C1}
we study the effects of personality of the (homogeneous) group on the group's
limiting behavior. We extend our model to \emph{heterogeneous} groups in \ref{sec:heterogeneous} and
examine limiting decision behavior when the group is formed  by  two extreme
personality classes in Section \ref{sec:twoPersonality}. The concluding remarks follow in Section \ref{sec:conclusions}.

\section{\label{sec:model}The model and the homogeneous case}
We consider a group of $N$ agents where each agent holds an opinion from the set $\{0,1\}$. 
An agent flips his/her opinion based on the group's current configuration and his/her  \emph{personality}. Here,  \emph{personality} of an agent $i$, $i = 1,\dots,N$,  is given by the pair $(\phi_i, \beta_i)$  where 
 $\phi_i: [0,1]\to[0,\infty)$ is a monotonic function that accounts for {\em conformity}, $\beta_i$ is a 
nonnegative quantity that accounts for {\em spontaneity}. We call a group
\emph{homogeneous} if all agents in the group share the same personality,
$\phi = \phi_i, \beta = \beta_i \; \forall i= 1,\dots,N$. We first look at the
case when the group is homogeneous. 

We define
$X^N(t)$ to be the number of agents holding opinion 1
 at time $t \in [0,\infty)$ and assume that a given agent 
flips his/her opinion during a time interval $(t,t+h]$ with a 
probability
\begin{equation}
\label{eq:opinion-change-rate}
(\phi(n/N) + \beta) h + o(h)\; \text{as}\; h \to 0+,
\end{equation}
where $n$ is the number of
agents with opposite opinion and $N$ is the total number of agents at time
$t$.
Thus we note that $\phi(x)$ determines the rate of conformity where $x$ is the
fraction of the population holding the opposite view. We note that the
pairwise interaction model is a special case of our model where $\phi$ is
linear, that is
$\phi(x)=\alpha x$ for some $\alpha>0$. 
 
This results in a continuous time Markov process model for $X^N(t)$ with the state space  $\{0,1,2,\ldots,N\}$. 
 Moreover,  we may regard $X^N(t)$ as a birth-death process since when an agent flips his/her opinion, the
process  increases/decreases by one. 
Using the rate of opinion change for an agent defined by \eqref{eq:opinion-change-rate}, 
 the birth rate $\lambda_i^N$ and the death rate  $\mu_i^N$ at the state $X^N(t) = i$ can be written as follows:
\begin{eqnarray}
\label{eq:jump-up}
\lambda_i^N
&=& \left(\phi \left(\frac{i}{N} \right) + \beta \right)(N-i)
=          (N-i)  \phi \left(\frac{i}{N} \right)+\beta (N-i), \nonumber \\
\mu_i^N 
\label{eq:jump-down}
&=& \left( \phi \left(\frac{N-i}{N} \right) + \beta \right) i
=      i \phi \left( \frac{N-i}{N} \right) + \beta i.
\end{eqnarray}
Since the state space $\{0,1,2,\ldots,N\}$ is finite, $\lambda_N^N=0$ and $\mu_0^N=0$. 
Using these transition rates one can construct a transition rate matrix
$Q^N=[q_{ij}^N]$,  where $q_{i( i+1)}^N = \lambda_i^N$, $q_{i( i-1)}^N = \mu_i^N$, $q^N_{ii}= - \sum_{j\ne i} q^N_{ij}$
and $q_{ij}^N = 0 \; \forall j \notin \{i-1,i, i+1\}$. 

We may rewrite the birth rate $ \lambda_i^N $  and  the death rate  $\mu_i^N$ as follows:
\begin{equation}
\begin{aligned}
\label{eq:bd-rates}
        \lambda_i^N &=  N \bar{\lambda} \left(\frac{i}{N} \right), \quad  \\
             \mu_i^N &= N\bar{\mu}\left(\frac{i}{N} \right) \quad i =0, 1,\dots, N,
\end{aligned}
\end{equation}
where $\bar{\lambda},\bar{\mu}:[0,1] \rightarrow  [0,\infty)$ are
  given by
\begin{equation}
\begin{aligned}
\label{lamda_mu_bar}
	\bar{\lambda}(x) =&  (1-x)  \phi(x)+ \beta (1-x), \\
	\bar{\mu}(x) =&    x \phi(1-x)+  \beta x.
	\end{aligned}
\end{equation} 
We define the probability   $p^N(t)=(p^N_0(t), p^N_1(t), \ldots, p^N_N(t) )$,
where  $p^N_i(t)=\mathbb{P}[X^N(t)=i]$ for each $i=0,1,\ldots,N$. The
probability mass function satisfies the Kolmogorov's forward equation 
\begin{equation}
\label{eq:p-derivative}
\dot{p}^N_j(t) =     \sum_k p^N_k(t)q^N_{kj}.
\end{equation}
It should be noted that when \emph{spontaneity} coefficient $\beta =0$, 
the states $i =0$ and $i= N$ are absorbing states since 
$\lambda_0^N = \mu_N^N
= 0$. When $\beta>0$, the birth-death process $X^N(t)$ is an irreducible
Markov process with the finite state space $\{0,1,2,\ldots,N\}$ and thus,
$X^N(t)$ is ergodic and  attains a  unique stationary probability distribution
as $t \to \infty.$ The probability vector  $p^N(t) \to \pi^N =
(\pi^N_0,\pi^N_1,\ldots,\pi^N_N)$ as  $t\rightarrow \infty$  and $\pi^N$ does
not depend on the initial state $X^N(0)$.  Using the detailed balance condition at stationarity, one can obtain
\begin{equation*}
                   \pi^N_n= \frac{\lambda^N_{n-1}\lambda^N_{n-2}\ldots\lambda^N_0}
                                {\mu^N_n \mu^N_{n-1}\ldots \mu^N_1} \pi_0^N, \quad n=1,2,\ldots,N.
\end{equation*}
Hence,
\begin{eqnarray}
\label{eq:pi_n}
        \pi_n^N &=& \frac{R^N_n}{\sum_{k=0}^N R^N_k}, \quad n=0,1,\ldots,N. \\
\label{eq:pi_0}
         \pi_0^N &=& \frac{1}{\sum_{k=0}^N R^N_k}, 
\end{eqnarray}
where  
\begin{equation}
\label{eq:R_n^N}
R_n^N = r^N_n r^N_{n-1}\ldots r^N_1, \quad n = 1,2,\ldots,N
\end{equation}
 for  $r_n^N  = \frac{\lambda^N_{n-1}}{\mu^N_n}$  and $R^N_0 = 1$.
 

\section{\label{sec:C1}The effects of the conformity function : The homogeneous case }

In order to study $X^N(t)$ for large $N$ and $t$, we shall consider the normalized process $X_N(t) = \frac{X^N(t)}{N}$. As $N \to \infty$, in the fluid limit, one expects $X_N$ to converge to $x$ 
where $x$ satisfies the ODE 
\begin{equation}
\label{eq:binary-ode-gen}
\dot{x}(t)= F(x(t)) =  \bar{\lambda}\big(x(t)\big) - \bar{\mu}\big(x(t)\big),
\end{equation}
where
\begin{equation}
\label{eq:vectorfield-homo}
F (x) =   \phi\left(x\right) \left( 1-x \right) - x \phi\left(1-x \right) +
         \beta (1 - 2 x).
\end{equation}
Intuitively, when $N$ and $t$ are both large, one expects the peaks of the probability 
distribution of $X_N(t)$ to occur near the stable equilibria of this ODE. 
This observation will motivate the rest of the analysis in this paper. 

While we do not make new claims about rigorous limits as $t \to \infty$ and $N \to \infty$ jointly, some rigorous limits exist in literature that we 
mention here. A major result is that if $F$ is $C^1$ (continuously differentiable), then  
given any finite time interval $[0,T]$, $X_N \to x$ uniformly 
on $[0,T]$ with probability one as $N \to \infty$. Moreover a diffusion approximation
for $X_N$ is also available \cite{ethier2009markov}. Since this result
only considers the limit as $N \to \infty$ over finite intervals of time, 
one needs to be cautious in interpreting the large $N$ and large $t$
approximation. In particular, if one fixes any large final time $t$, and considers 
increasing $N$, then one expects distributions at time $t$ to have peaks
around the stable equilibria of the ODE.   
When $F$ has a unique globally attractive 
equilibrium $\overline{x}$, under suitable conditions, as $N \to \infty$ one can
rigorously justify a Gaussian approximation with mean $\overline{x}$ for the
stationary probability distribution (see Theorem 2.7 in  \cite{kurtz1976limit}).

%
Henceforth, we shall study the system \eqref{eq:binary-ode-gen} for its
stable equilibria. We note that $F(0)>0$, $F(1)<0$ and $\overline{x}=\frac{1}{2}$ is always an
equilibrium for the dynamics \eqref{eq:binary-ode-gen}. If
$\overline{x}=\frac{1}{2}$ is the unique equilibrium, it will be globally
attractive on $[0,1]$. Then, for very large $N$, the stationary probability
distribution will be narrow and approximately Gaussian with mean $\overline{x}$ and hence, the
model leads to balance of opinions. 

We shall see that the shape of the conformity function $\phi$ plays an
important role in deciding if dominance of an opinion is likely.  
Intuitively, one may expect that greater conformity leads to dominance of one
opinion while greater spontaneity leads to balance of opinions via a law of
large number effect. However, our examples suggest that the dependence on 
$\phi$ is more subtle in that the shape of the function plays a crucial role. 
We see that when $\phi$ is strictly convex, for sufficiently small $\beta$ 
dominance is observed. When $\phi$ is not strictly convex or if it is concave, we do not 
see dominance in our examples.

\begin{remark} For the sake of precision, we shall use the term {\em balance} to
mean the situation where there is only one stable equilibrium of
\eqref{eq:binary-ode-gen} which is $\overline{x}=1/2$. We shall use the 
term {\em dominance} rather loosely to stand for lack of balance.
\end{remark}
In order to investigate the effects of $\phi$, it makes sense to make some
natural assumptions. The most natural 
conditions on $\phi:[0,1] \to [0,\infty)$ are that $\phi(0)=0$ and $\phi$ is
  increasing for conformity, and $\phi(1)=0$ and decreasing for {\em rebelliousness}
  (opposite of conformity). 
\begin{thm}
\label{homogeneous_dominance}
Consider a monotone, $C^1$ conformity function $\phi(x): [0,1] \to [0, \infty)$ such that
  $\phi'(x)$ strictly increasing on $[0,1]$ and suppose that $\phi(0)=0$.
Then for sufficiently small $\beta$, the equilibrium $\overline{x}=1/2$ is unstable and hence the model leads to dominance of one opinion for large $N$ and large $t$. 
\end{thm}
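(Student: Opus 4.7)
My plan is to reduce the instability of $\bar{x}=1/2$ to checking the sign of $F'(1/2)$, and then to reduce that sign question to a strict convexity inequality. Differentiating \eqref{eq:vectorfield-homo} and evaluating at $x=1/2$ gives
\begin{equation*}
F'(1/2) = \phi'(1/2) - 2\phi(1/2) - 2\beta.
\end{equation*}
Provided the $\beta=0$ quantity $\phi'(1/2) - 2\phi(1/2)$ is strictly positive, I can take $\beta$ small enough that $F'(1/2)>0$, which makes $\bar{x}=1/2$ a linearly unstable equilibrium of \eqref{eq:binary-ode-gen}.

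The crux is therefore the inequality $\phi'(1/2) > 2\phi(1/2)$, and I would deduce it directly from the hypotheses. Since $\phi'$ is strictly increasing on $(0,1)$, the function $\phi$ is strictly convex there; combined with $\phi(0)=0$, the tangent-line characterization of strict convexity applied at $x_0=1/2$ and evaluated at $y=0$ yields
\begin{equation*}
0 \;=\; \phi(0) \;>\; \phi(1/2) + \phi'(1/2)(0-1/2) \;=\; \phi(1/2) - \tfrac{1}{2}\phi'(1/2),
\end{equation*}
which rearranges to $\phi'(1/2) > 2\phi(1/2)$. An equivalent one-line version: $\phi(1/2) = \int_{0}^{1/2}\phi'(t)\,dt < \tfrac{1}{2}\phi'(1/2)$ by strict monotonicity of $\phi'$.

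To translate the ODE instability into the conclusion as stated, I would invoke the heuristic endorsed in the paragraph around \eqref{eq:binary-ode-gen}: for large $N$ the peaks of the stationary distribution of $X_N$ sit near the stable equilibria of the fluid-limit ODE. Since $F(0)>0$, $F(1)<0$, $F(1/2)=0$ and $F'(1/2)>0$, the intermediate value theorem forces additional zeros of $F$ in $(0,1/2)$ and in $(1/2,1)$, with at least one stable equilibrium on each side; the identity $F(1-x)=-F(x)$ makes the two sides mirror images. Per the Remark immediately preceding the theorem, this absence of a unique stable equilibrium at $1/2$ is precisely what the paper labels dominance.

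I do not anticipate any real obstacle: the argument is a one-derivative computation plus an elementary strict-convexity lemma. The one subtlety worth flagging is that strict convexity (not mere convexity) is essential, since for the linear case $\phi(x)=\alpha x$ one has equality $\phi'(1/2)=2\phi(1/2)$, giving $F'(1/2)=-2\beta \le 0$; the argument correctly fails to predict dominance in the pairwise-interaction model, consistent with the discussion below \eqref{eq:opinion-change-rate}.
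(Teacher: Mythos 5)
Your proposal is correct and follows essentially the same route as the paper: both reduce the claim to the sign of $F'(1/2) = \phi'(1/2) - 2\phi(1/2) - 2\beta$ and then establish $\phi'(1/2) > 2\phi(1/2)$ from strict monotonicity of $\phi'$ together with $\phi(0)=0$ (the paper via the mean value theorem giving $2\phi(1/2)=\phi'(c)$ for some $c\in(0,1/2)$, you via the equivalent tangent-line/integral form of strict convexity), before concluding with the same endpoint-sign argument for the existence of the two symmetric, generically stable equilibria.
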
 
\begin{pf}
Define
$
G(x) = \phi(x) (1-x) - x \phi(1-x).
$
Then, the vector field \eqref{eq:vectorfield-homo} is 
\[
F(x) = G(x) + \beta (1-2x).
\]
We note that $G(0)=G(1/2)=G(1)=0$. 
We note that $F'(1/2) = G'(1/2) - 2 \beta$, and that 
 \[
  	G'(1/2) = \phi'(1/2) - 2 \phi(1/2).
 \]
Using the mean value theorem for $\phi(x)$ on $ [0,1/2]$, we conclude that for some $c \in (0,1/2)$, 
\[
	\phi'(c) = \frac {\phi(1/2)-\phi(0) }{1/2} = 2 \phi(1/2),
\]
since $\phi(0) = 0$. Thus,
\[
	G'(1/2) = \phi'(1/2) -\phi'(c) >0, \quad c\in(0,1/2)
\]
since $\phi '(x)$ is strictly increasing on $[0,1]$. Then $F'(1/2)=G'(1/2) - 2 \beta >0$ for sufficiently small $\beta$ and thus  $1/2$ is unstable. 
\end{pf}
 
 Next, we provide examples with various conformity functions $\phi$ and investigate the stable
equilibria to predict dominance or balance. We shall also check the prediction
from the stable equilibria of the ODE model 
against computational results of the stationary probability distributions. 
We note that, there are two methods to compute the stationary
distributions. One is to use the formulas \eqref{eq:pi_n} and \eqref{eq:pi_0}
and the other is to use an ODE solver to compute the solution to
\eqref{eq:p-derivative}. For very large $N$ values, our numerical experiments
suggest that using \eqref{eq:pi_n} and \eqref{eq:pi_0}  provide more accurate
results compared to the ODE solver. Hence, throughout this study, we refer to
\eqref{eq:pi_n} and \eqref{eq:pi_0} to verify our predictions from the 
stable equilibria of \eqref{eq:binary-ode-gen}.

\begin{example}
Consider the simplest example of $\phi(x) =x$. This is convex, but not
strictly so. 
In this case, the rate that an agent changes his/her opinion is
$\frac{n}{N} + \beta$.  
Using \eqref{eq:binary-ode-gen}, we can conclude that as $N$ gets large $X_N(t)$ converges to $x(t)$, where 
\begin{equation}
\label{eq:ode-x}
	\dot{x}(t) = \beta ( 1 -2 x(t) ).
\end{equation}
Since this ODE has a unique equilibrium at $\overline{x} = \frac{1}{2}$ that is  globally attractive,  regardless of spontaneity coefficient $ \beta>0$,  it is expected that the group will reach balance of opinions as can be observed in Fig.~\ref{fig:exact_gen1_x}.  We note that, as can be seen in  Fig.~\ref{fig:exact_gen1_x}(b), when $\beta$ is small (in this case $\beta= 0.01$) , this results in a a wider bell shaped curve. That is,  stationary probabilities are still non zero around the equilibrium point. However, as  $N$ gets larger, the bell shaped curve becomes narrower. It is thus interesting to note that for $\beta>0$, no matter how small, one expects balance of opinions for sufficiently large $N$.

\begin{figure}[h]
\centering
\begin{subfigure}{.5\textwidth}
  \centering
  \includegraphics[width=1\linewidth]{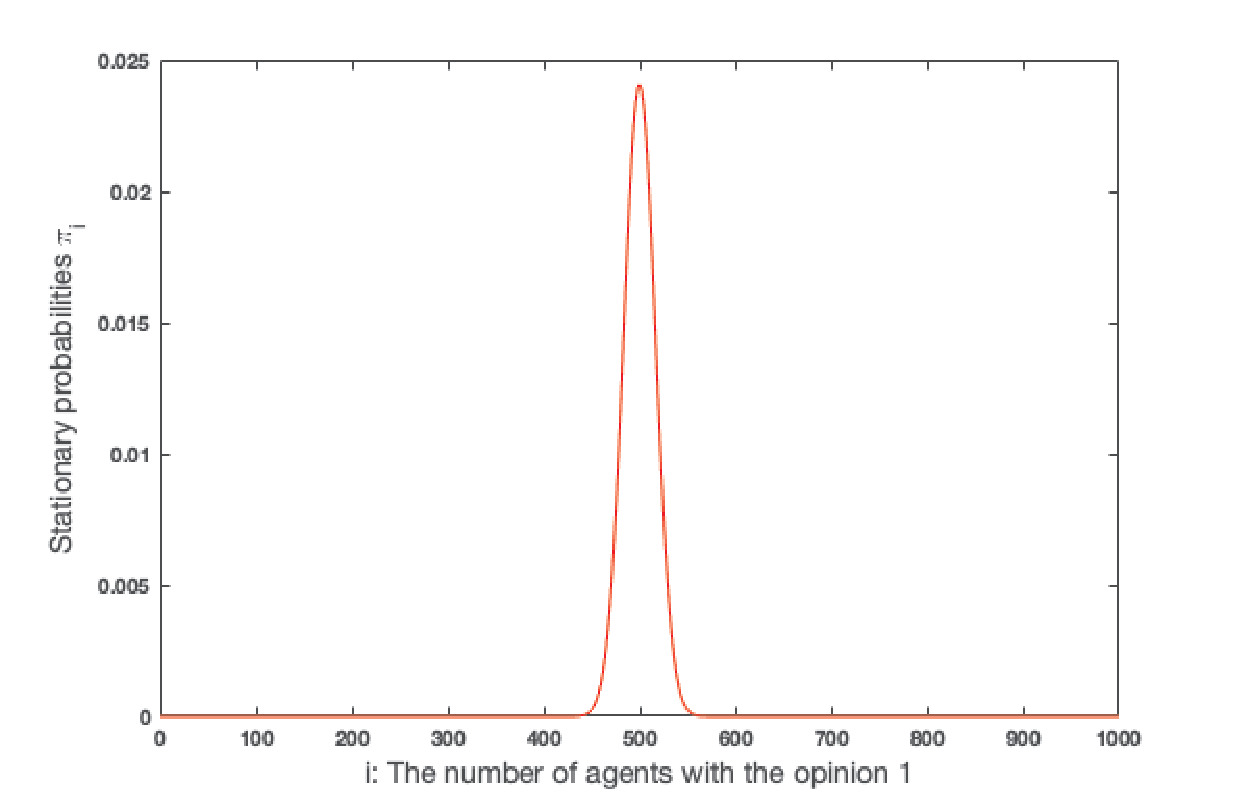}
  \caption{}
  \label{exact_gen1.eps}
\end{subfigure}%
\begin{subfigure}{.5\textwidth}
  \centering
  \includegraphics[width=1\linewidth]{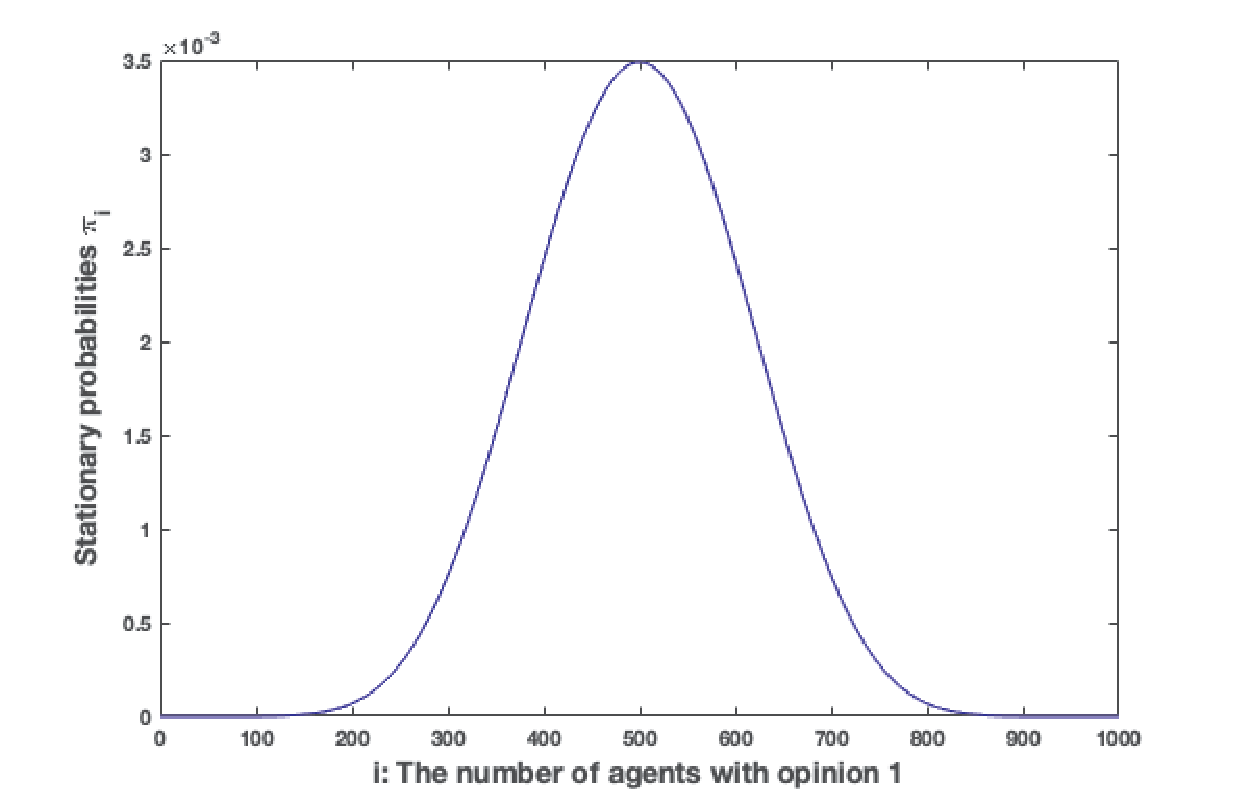}
  \caption{}
  \label{fig:exact_gen1_second.eps}
\end{subfigure}
\caption{The stationary probabilities calculated using \eqref{eq:pi_n} and \eqref{eq:pi_0}  for $\phi(x) = x$ and $N = 1000$. (a) $\beta = 5$. (b) $\beta = 0.01$. }
\label{fig:exact_gen1_x}
\end{figure}
\end{example}
\begin{example}
Consider  $\phi(x) = x^2$ which is strictly convex and $\phi(0)=0$. Thus, we can apply  Theorem \ref{homogeneous_dominance}.
The limiting  ODE is
\begin{equation}
\label{eq:binary-ode_x2}
	\dot{x}(t) =\left(1-2x(t) \right) ( x(t)^2- x(t) + \beta).
\end{equation}
It is easy to see that \eqref{eq:binary-ode_x2}  has three possible
equilibria; $\overline{x}_1 = \frac{1}{2}$ and $\overline{x}_{2,3} =
\frac{1}{2} \pm \frac{\sqrt{1 -4 \beta}}{2 }$. Hence, based on the choice of
spontaneity coefficient $ \beta$, different scenarios are expected. When
$\beta \geq \frac{1}{4}$, $\overline{x}_1=\frac{1}{2}$ is  the only stable equilibrium and the model leads to balance of opinions. On the other hand, when $ \beta < \frac{1}{4}$, $\overline{x}_1=\frac{1}{2}$ is unstable and hence the model leads to dominance. 

In Fig.~\ref{fig:exact_gen2_x2}(a) one can observe that  the model leads to balance of opinions for $\beta = 5$ since the stationary probability distribution has its peak at the half of the population i.e., $\overline{x}_1=\frac{1}{2}$.    On the other hand, when $\beta =0.2$, the stable equilibria are $\overline{x}_{2,3} =
\frac{1\pm \frac{1}{\sqrt{5}}}{2} $, approximately $x_2 = 0.72$ and $x_3 = 0.28$. The stationary probability distribution have peaks around these equilibria as can be seen in Fig.~\ref{fig:exact_gen2_x2}(b). Thus, it is expected that  $72\%$ of the population will hold one opinion resulting in dominance of that one opinion. 
\begin{figure}[h]
\centering
\begin{subfigure}{.5\textwidth}
  \centering
  \includegraphics[width=1\linewidth]{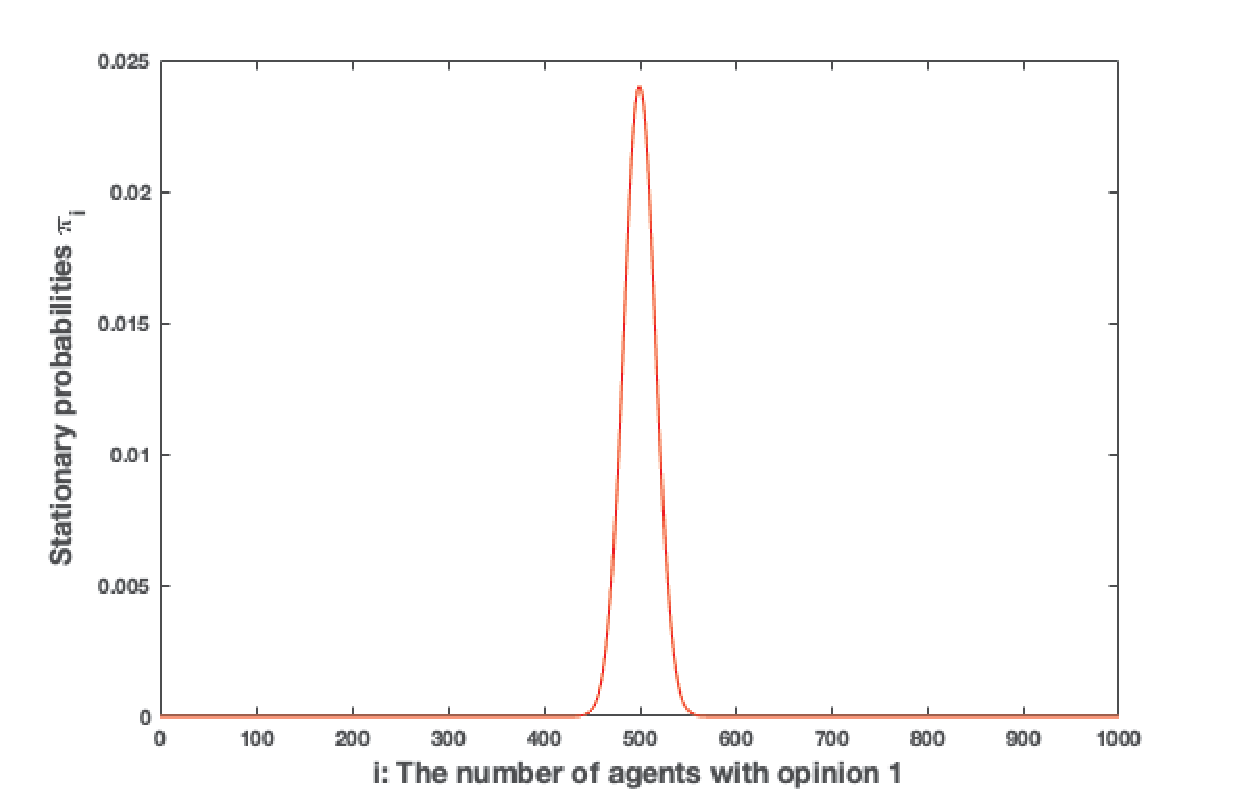}
  \caption{}
  \label{exact_gen2.eps}
\end{subfigure}%
\begin{subfigure}{.5\textwidth}
  \centering
  \includegraphics[width=1\linewidth]{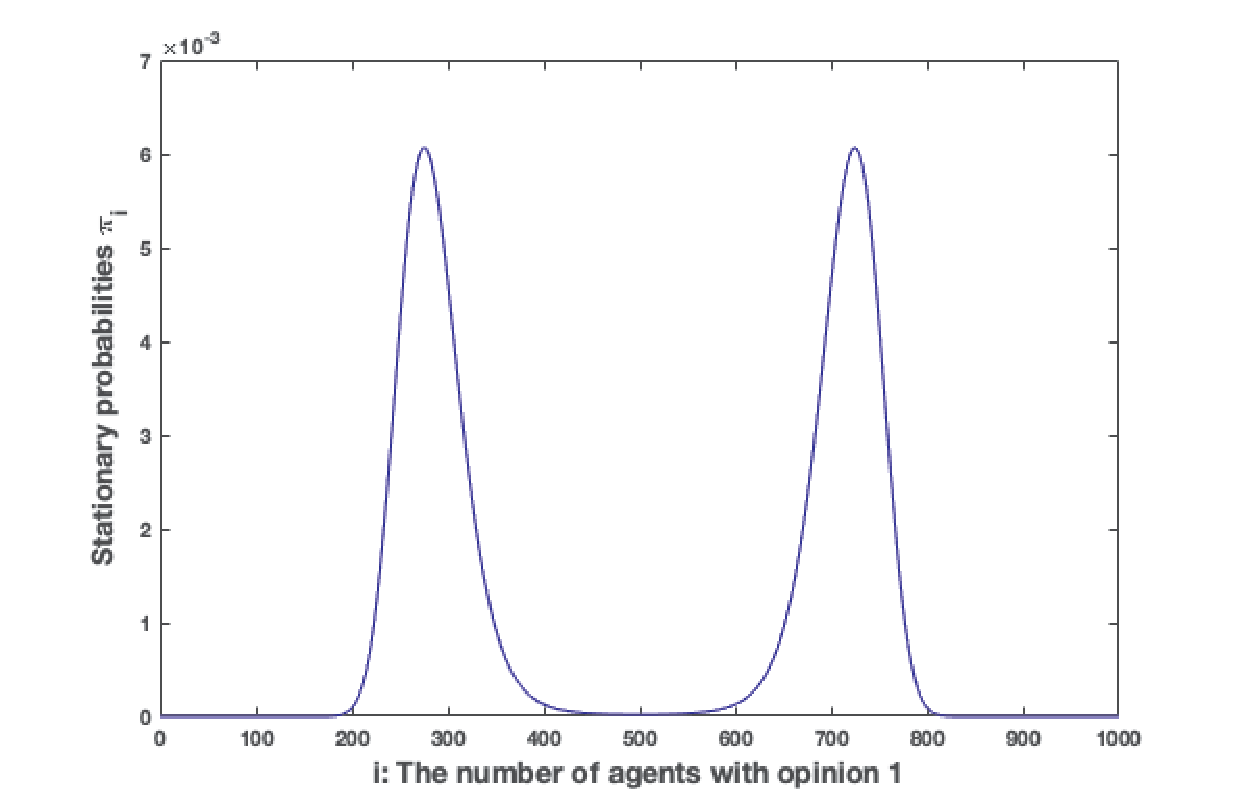}
  \caption{}
  \label{fig:exact_gen2_second.eps}
\end{subfigure}
\caption{The stationary probabilities calculated using \eqref{eq:pi_n} and  \eqref{eq:pi_0} for $\phi(x) = x^2$ and  $N = 1000$. (a) $\beta = 5 $. (b) $\beta = 0.2$. }
\label{fig:exact_gen2_x2}
\end{figure}
\end{example}
\begin{example}
We consider $\phi(x) = 1-x^2$ to explore the impact of rebelliousness on our model. Since $\phi(x)$ is concave on $[0,1]$ Theorem \ref{homogeneous_dominance} does not apply. In this case, the rate an agent changes his/her opinion  is $ \frac{N^2-n^2}{N^2} + \beta$. Hence, as the number of  agents with the opposite opinion increases, the rate of opinion change decreases. 
The limiting ODE for this model is
\begin{equation}
\label{eq:binary-ode_(1-x2)}
	\dot{x}(t) =\big( x(t)^2 - x(t) - 1 - \beta \big) \big( 2x(t)-1 \big).
\end{equation}
One can see that \eqref{eq:binary-ode_(1-x2)} has a unique equilibrium at
$\overline{x} = \frac{1}{2}$ which is stable regardless of the spontaneity coefficient $\beta
> 0$. Thus, we can conclude that the model leads to balance of opinions as 
can also be observed in Fig.~\ref{fig:exact_gen_(1-x2)}.
\begin{figure}[h]
\centering
\begin{subfigure}{.5\textwidth}
  \centering
  \includegraphics[width=1\linewidth]{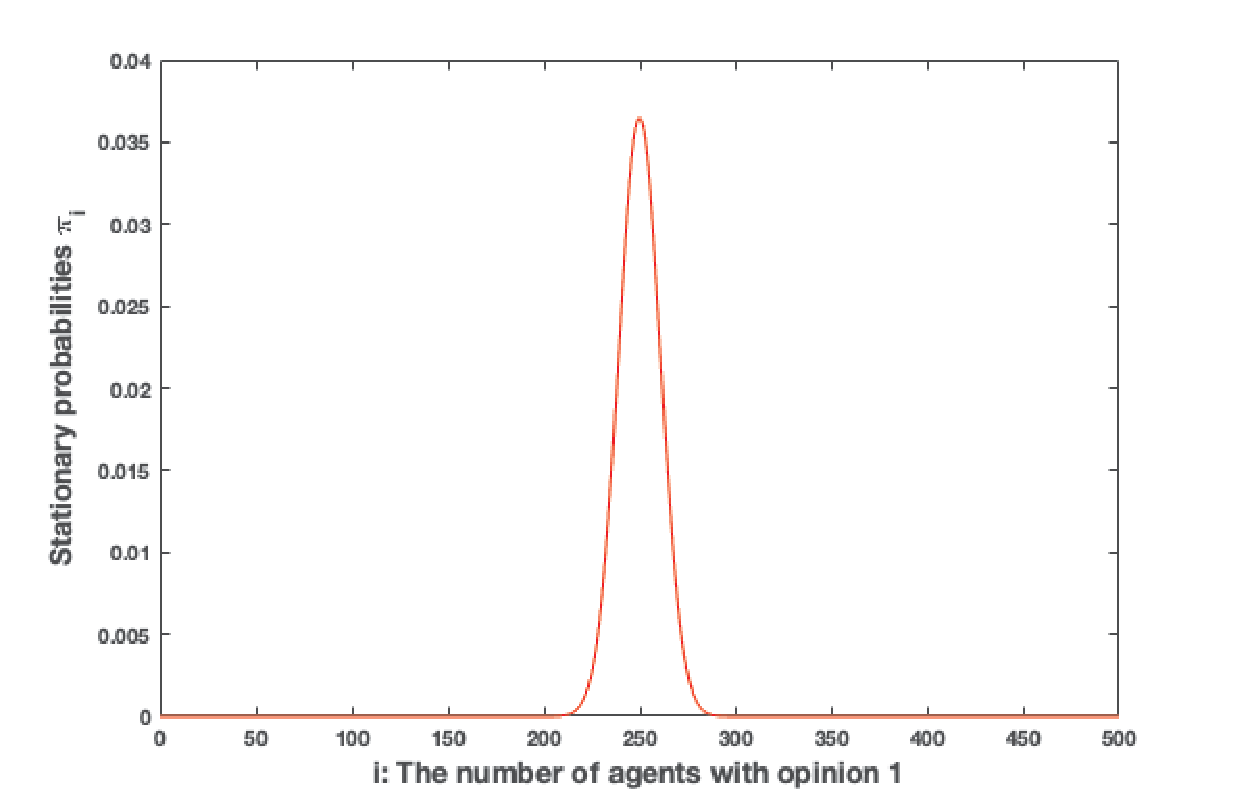}
  \caption{}
  \label{fig:exact500_(1-x2)}
\end{subfigure}%
\begin{subfigure}{.5\textwidth}
  \centering
  \includegraphics[width=1\linewidth]{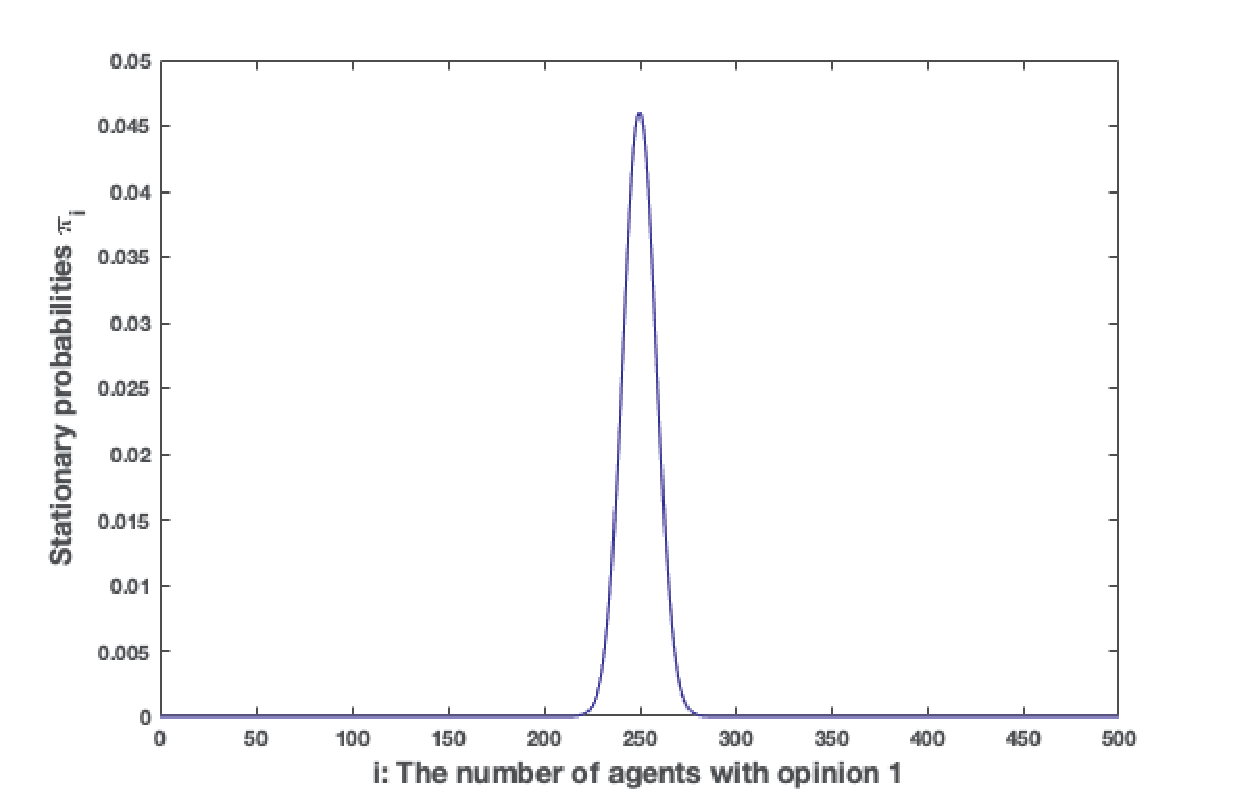}
  \caption{}
  \label{fig:exact500_(1-x2)_second}
\end{subfigure}
 \caption{The stationary probabilities calculated using \eqref{eq:pi_n} for $\phi(x) = 1-x^2$ and and $N = 500$. (a) $\beta = 7$. (b) $\beta = 0.001$.}
\label{fig:exact_gen_(1-x2)}
\end{figure}
\end{example}
\begin{example}
 Let  $\phi(x) = \sqrt{x}$ which is concave. Then 
the corresponding ODE is 
\begin{equation}
\label{eq:binary-ode_(sqrtx)}
  \dot{x}(t) = (1-2x(t))  \left( \frac{ \sqrt{x(t)}\sqrt{1-x(t)} }{ \sqrt{1-x(t)}+\sqrt{x(t)} }  + \beta \right) .
\end{equation}
We first observe that $F$ is not $C^1$ on $[0,1]$ and the fluid limit theorem 
does not apply. Nevertheless, we proceed heuristically to look for the stable equilibria. 
One can observe that $\overline{x} = 1/2$ is the only equilibrium for  \eqref{eq:binary-ode_(sqrtx)}.
Thus, one may expect that the model will lead to balance of opinions regardless of the choice of the spontaneity coefficient $\beta$ as can be observed in Fig.~\ref{fig:exact_gen_sqrt}.
\begin{figure}[h]
\centering
\begin{subfigure}{.5\textwidth}
  \centering
  \includegraphics[width=1\linewidth]{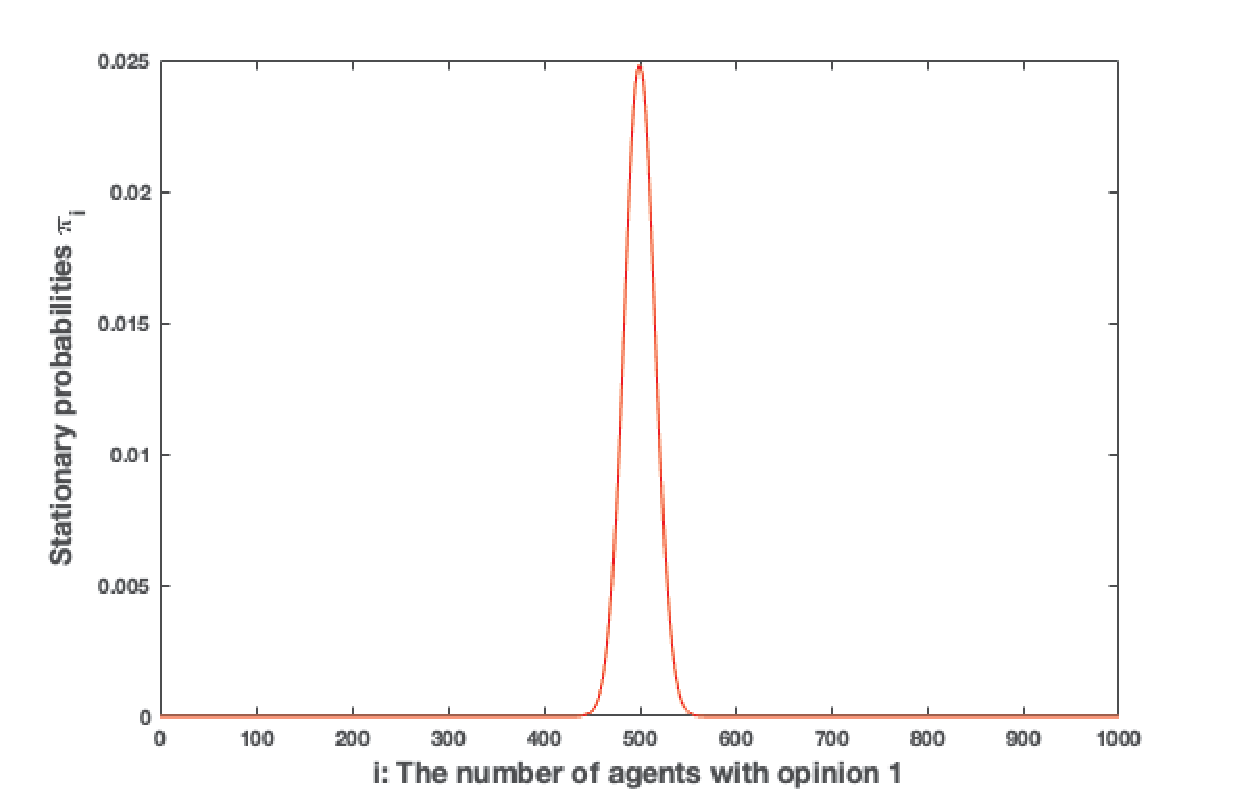}
  \caption{}
\end{subfigure}%
\begin{subfigure}{.5\textwidth}
  \centering
  \includegraphics[width=1\linewidth]{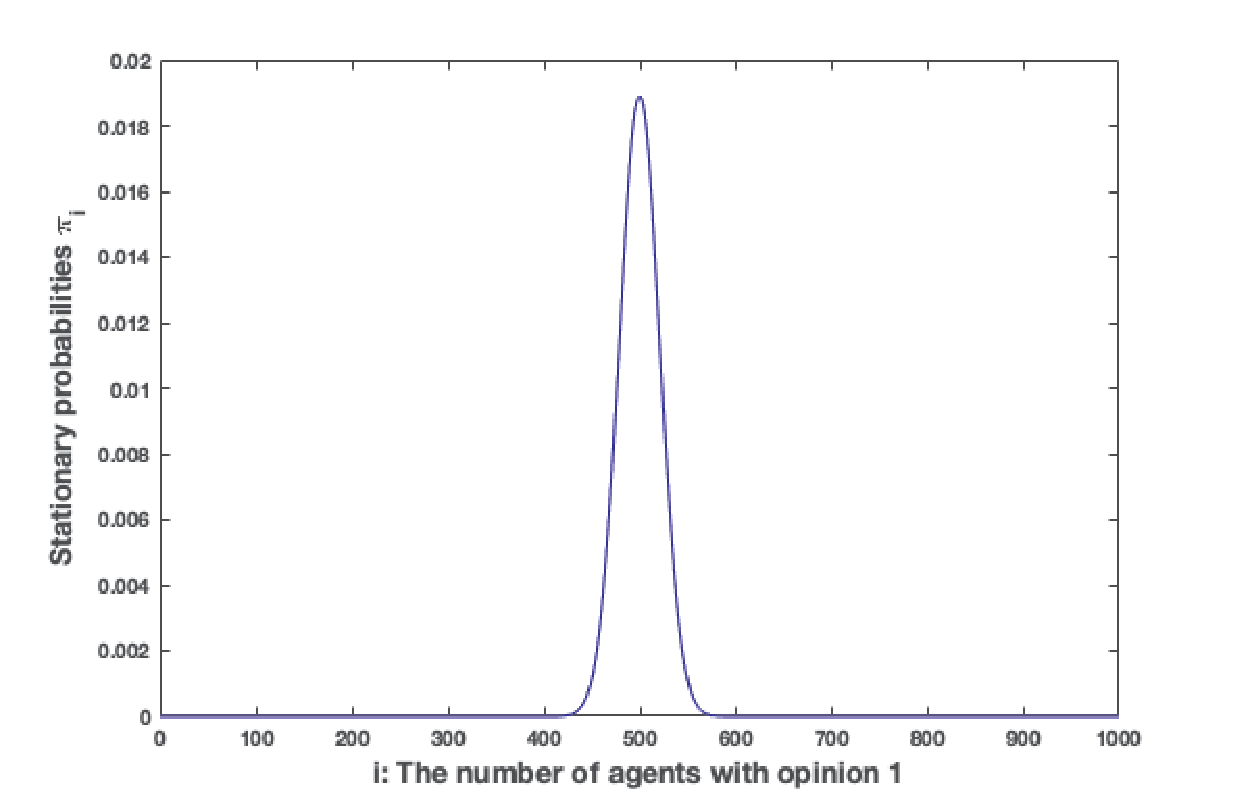}
  \caption{}
\end{subfigure}
\caption{The stationary probabilities calculated using \eqref{eq:pi_n} and  \eqref{eq:pi_0} for $\phi(x) = \sqrt{x}$ and  $N = 1000$. (a) $\beta = 10 $. (b) $\beta = 0.1$. }
\label{fig:exact_gen_sqrt}
\end{figure}

\end{example}
\begin{example}
Consider the monotone increasing, convex conformity
function $\phi(x) = \frac{x}{1-x}$, so that the conformity function is the
ratio of the fraction of agents with opposite opinion to those with the same opinion. Then, the rate of change of one's opinion is 
$\frac{n}{N-n} +\beta$.
We note that $\phi(x)$ has a singularity at $x=1$ and that \eqref{eq:bd-rates} does not hold for $i = 0,N$, 
and hence the fluid limit theorem does not hold. Nevertheless, we proceed 
heuristically to 
consider $F$ in \eqref{eq:vectorfield-homo} which is given by
\[
F(x) = (\beta -1)(1-2x),
\]
which has only one equilibrium $\overline{x}=1/2$ and it is (asymptotically) stable if and only
if $\beta >1$. Thus we expect balance for $\beta>1$. When $\beta <1$, we 
expect dominance of one opinion. This heuristic is verified by our numerical computations of the stationary
probabilities \eqref{eq:pi_n} and \eqref{eq:pi_0} as shown in 
 Fig.~\ref{fig:discontinuous}.
\begin{figure}[h]
\centering
\begin{subfigure}{.5\textwidth}
  \centering
  \includegraphics[width=1\linewidth]{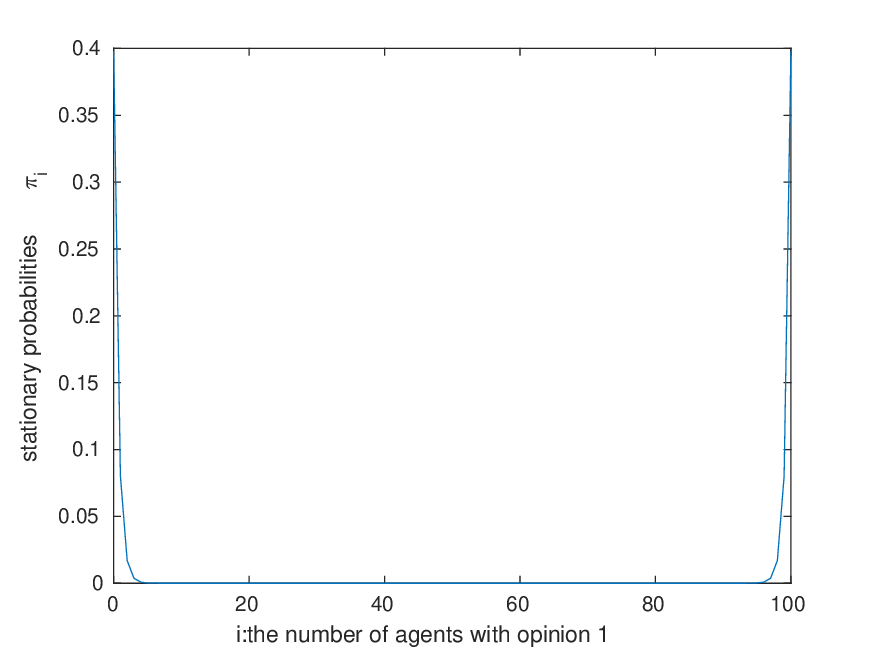}
  \caption{}
\end{subfigure}%
\begin{subfigure}{.5\textwidth}
  \centering
  \includegraphics[width=0.9\linewidth]{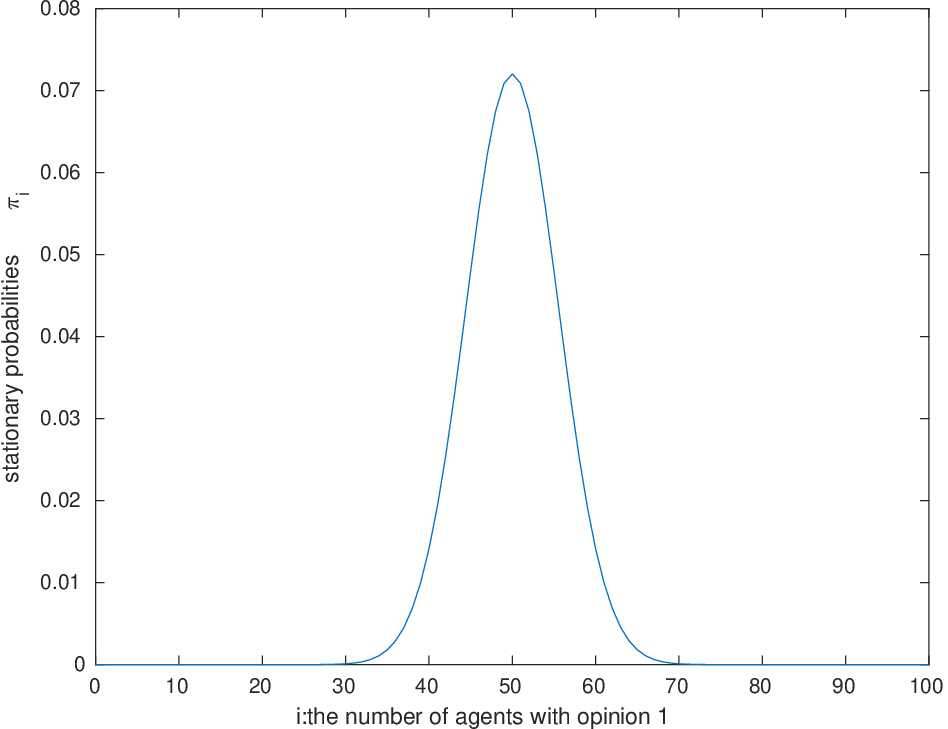}
  \caption{}
\end{subfigure}
\caption{The stationary probabilities calculated using \eqref{eq:pi_n} and  \eqref{eq:pi_0} for 
$\phi(x) = \frac{x}{1-x}$ and $N = 100$ (a) $\beta = 0.2$. (b) $\beta = 10$. }
\label{fig:discontinuous}
\end{figure}
\end{example}


\section{\label{sec:heterogeneous}Heterogeneous binary opinion dynamics}
Let us consider the case where the group is \emph{heterogeneous}. Namely, suppose we have $m$ personality classes of agents such that all agents 
within a Class $i$ (where $i=1,\dots, m$) have the same personality
$(\phi_i,\beta_i)$, but personalities differ among the classes. 
This results in a Markov process model where the state is 
a vector $x=(x_1,\dots, x_m)$ where $0 \leq x_i \leq N_i$ is the
number of Class $i$ agents who hold opinion $1$ with $N_i$ being the 
total number of Class $i$ agents. We assume that the personalities of agents 
is fixed in time, thus $N_i$ is a constant for each $i$ and $N=N_1+\dots+N_m$
is the total number of all agents. We note that during a time interval
$(t,t+h]$ an agent from Class $i$ will flip with probability
\[
(\phi_i(n/N) + \beta_i)h + o(h) \;\; h \to 0+,
\]
where $n$ is the total number of all agents who have the opposite opinion 
to that of the given agent. We shall be concerned with the case of large $N$ 
with the fractions $k_i=N_i/N$ within classes being held constant. 
This results in a family of Markov process $X^N(t)$ which undergo
a jump $e_i$ or $-e_i$ for $i=1,\dots,m$ (here $e_i \in \real^m$ is the vector
with $i$th component equal to one and all others equal to zero) with corresponding Class $i$ birth and death rates given by
\begin{equation}\label{eq-hetero-lambdamu}
\begin{aligned}
\lambda^N_i(x) &= \phi_i\left(\frac{|x|}{N}\right) (N_i-x_i) + \beta_i (N_i - x_i),\\   
\mu^N_i(x) &= \phi_i\left(1-\frac{|x|}{N}\right) x_i + \beta_i x_i,   
\end{aligned}
\end{equation}
where given the state $x=(x_1,\dots,x_m)$ we denote by $|x|$ the total 
number of agents holding opinion $1$:
\[
|x| = \sum_{i=1}^m x_i.
\]
We note that $0 \leq x_i \leq N_i$. We shall consider the normalized process $X_N(t)=X^N(t)/N$, 
where $X_{N,i}(t)$ is the fraction of Class $i$ agents with opinion $1$ where the fraction is
normalized by $N$ and not $N_i$. We may write
\[
\begin{aligned}
\lambda^N_i(x) &= N \bar{\lambda}_i\left( \frac{x}{N}\right), \\   
\mu^N_i(x) &= N \bar{\mu}_i \left( \frac{x}{N} \right)    
\end{aligned}
\]
where 
\begin{equation}\label{eq-hetero-barlambdamu}
\begin{aligned}
\bar{\lambda}_i(x) &= \phi_i(|x|) (k_i-x_i) + \beta_i (k_i-x_i),\\   
\bar{\mu}_i(x) &= \phi_i\left(1-|x|\right) x_i + \beta_i x_i,    
\end{aligned}
\end{equation}
and as before $|x|=x_1 + \dots x_m$.

In the fluid limit, as $N \to \infty$, one expects $X_N$ to converge to $x$ 
where $x$ satisfies the ODE 
\[
\dot{x}(t)= F(x(t)),
\]
where the $m$ dimensional vector field $F$ is given by
\[
F_i(x) = \bar{\lambda}_i(x)-\bar{\mu}_i(x), \quad i=1,\dots,m,
\]
which simplifies to
\begin{equation}\label{eq_hetero_F}
F_i(x) = \beta_i (k_i - 2 x_i) + \phi(|x|) (k_i-x_i) - \phi(1-|x|) x_i
\end{equation}
for $ i=1,\dots,m.$ When $N$ and $t$ are both large, we expect to see the peaks of the probability 
distribution of $X_N(t)$ to occur near the stable equilibria of this ODE. 
We note that $\bar{x}=(k_1/2,\dots,k_m/2)$ is always an equilibrium.

    \section{The case of two extreme personality classes\label{sec:twoPersonality}}

We consider a heterogeneous group formed with two extreme personality classes $(\phi_i,\beta_i)$ for $i=1,2$ where 
\begin{equation}\label{eq_two_types}
\begin{aligned}
\phi_1(\xi) &= \phi(\xi), & \beta_1&=0,\\
\phi_2(\xi) &= 0,  & \beta_2&=\beta>0,\\ 
\end{aligned}
\end{equation}
where  $\phi: [0,1] \to \real$ is a monotonic function.
We note that Class $1$ corresponds to total conformity and Class $2$ 
corresponds to total spontaneity. Let us write $k_1= 2k$ (thus $k$ is half the
fraction of Class $1$) and thus $k_2=1- 2k$. This results in 
\begin{equation}\label{eq:vectorfields}
\begin{aligned}
F_1(x) &= \phi(x_1+x_2)(2k-x_1) - \phi(1-x_1-x_2) x_1,\\
F_2(x) &= \beta (1-2k - 2 x_2).
\end{aligned}
\end{equation}
At an equilibrium, clearly $x_2=\frac{1}{2}-k$ and $ \bar{x}=(k,1/2-k)$ is always an equilibrium. Additional equilibria are found by solving the equation
\begin{equation}\label{eq:eq-x1bar}
 \phi(x_1 + \frac{1}{2}-k) (2k-x_1) - \phi(\frac{1}{2}-x_1+k) x_1 =0
\end{equation}
for $x_1$. We note that Class 2 (spontaneous class) is always expected to reach a balance since at an equilibrium $x_2 = \frac{1-2k}{2}=\frac{k_2}{2}$. 

\begin{thm}
Consider the group with two extreme personality classes  \eqref{eq_two_types} such that the conformity function $\phi(x) : [0,1] \to [0,\infty)$ is monotone, $C^1$ and 
$\phi'(x)$ is strictly increasing on $[0,1]$. When the fraction of the class of conformists   $2k > \frac{2\phi(1/2)}{  \phi'(1/2)}$,  the equilibrium $\bar{x}= (k,1/2-k)$  is unstable and hence  the model \eqref{eq_two_types} leads to dominance of one opinion for large $N$ and large $t$. 
\end{thm}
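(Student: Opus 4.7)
The plan is to mimic the analysis of the homogeneous case by linearizing the two-dimensional fluid-limit vector field $F=(F_1,F_2)$ from \eqref{eq:vectorfields} at the symmetric equilibrium $\bar{x}=(k,\,1/2-k)$ and showing that its Jacobian admits a strictly positive eigenvalue under the stated hypothesis. Once $\bar{x}$ is shown to be a hyperbolic unstable equilibrium of the ODE $\dot{x}=F(x)$, the same fluid-limit heuristic invoked throughout Section \ref{sec:C1} (the peaks of the stationary distribution of $X^N$ accumulate near the stable rather than unstable equilibria) will deliver the claimed dominance for large $N$ and large $t$.

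The observation that makes the computation painless is that at $\bar{x}$ one has $x_1+x_2 = 1/2$, $1-x_1-x_2 = 1/2$, and $2k-x_1 = x_1 = k$, so every evaluation of $\phi$ or $\phi'$ collapses onto the symmetric point $1/2$. Differentiating \eqref{eq:vectorfields} and substituting at $\bar{x}$ yields the Jacobian
\[
J \;=\; \begin{pmatrix} 2k\,\phi'(1/2) - 2\phi(1/2) & 2k\,\phi'(1/2) \\ 0 & -2\beta \end{pmatrix}.
\]
Since $J$ is upper triangular, its eigenvalues are read off as $\lambda_1 = 2k\,\phi'(1/2) - 2\phi(1/2)$ and $\lambda_2 = -2\beta$. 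The eigenvalue $\lambda_2$ simply reflects the fact that the purely spontaneous class 2 is driven to its balance $x_2 = (1-2k)/2$; the interesting eigenvalue is $\lambda_1$.

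To close the argument I would first confirm that $\phi'(1/2)>0$, by recycling the mean-value trick from the proof of the homogeneous theorem: with $\phi(0)=0$ and $\phi'$ strictly increasing, there exists $c\in(0,1/2)$ with $\phi'(c)=2\phi(1/2)$, whence $\phi'(1/2)>\phi'(c)\geq 0$. The threshold $2\phi(1/2)/\phi'(1/2)$ is therefore well-defined, and the hypothesis $2k>2\phi(1/2)/\phi'(1/2)$ is exactly equivalent to $\lambda_1>0$. Hence $\bar{x}$ is a saddle and so unstable, which yields dominance by the heuristic above. The computation itself is routine; the only genuine subtlety lies in interpreting two-dimensional instability as ``dominance of one opinion,'' which is unambiguous here because the unstable eigendirection has nonzero $x_1$-component while the $\lambda_2=-2\beta$ direction pins $x_2$ at its balance value, so the orbits emanating from $\bar{x}$ carry the conformist fraction $x_1$ away from $k$ and toward an asymmetric stable equilibrium of \eqref{eq:eq-x1bar}.
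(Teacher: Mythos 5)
Your proof is correct and follows essentially the same route as the paper: linearize $F$ from \eqref{eq:vectorfields} at $\bar{x}=(k,1/2-k)$, observe that the Jacobian is triangular because $\partial F_2/\partial x_1=0$, and read off the eigenvalues $2k\,\phi'(1/2)-2\phi(1/2)$ and $-2\beta$, so that the hypothesis $2k>\frac{2\phi(1/2)}{\phi'(1/2)}$ (together with your useful side remark that $\phi'(1/2)>0$, which the paper leaves implicit) yields a positive eigenvalue and hence instability. The only ingredient the paper supplies that you omit is an intermediate-value argument on the invariant line $x_2=\tfrac12-k$ --- namely $F_1(0,\tfrac12-k)=2k\,\phi(\tfrac12-k)>0$ and $F_1(\tfrac12+k,\tfrac12-k)=(k-\tfrac12)\phi(1)<0$ --- which establishes the existence of the two symmetric (generically stable) equilibria toward which you assert the orbits flow.
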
 
\begin{pf}
The Jacobian at the equilibrium,
\[
	J(k, 1/2-k) = 
  \begin{bmatrix}
    \frac{\partial F_1}{\partial x_1} & \frac{\partial F_1}{\partial x_2}  \\
    \frac{\partial F_2}{\partial x_1} & \frac{\partial F_2}{\partial x_2}.
  \end{bmatrix} =
  \begin{bmatrix}
  2k \phi'(1/2) -2 \phi(1/2) & 2k\phi'(1/2) \\
                  0          & -2\beta
  \end{bmatrix}.
\]
Hence the eigenvalues are
\[
\epsilon_1 = 2k \phi'(1/2) -2 \phi(1/2), \quad   \epsilon_2 = \frac{\partial F_2}{\partial x_2} =  -2\beta<0.
\]
Thus, when $2k >  \frac{2\phi(1/2)}{  \phi'(1/2)}$, the equilibrium $\bar{x}= (k, 1/2-k)$ is unstable and the model leads to dominance. Note that
$\phi'(1/2) \neq 0$.  Otherwise, since 
$\phi'(x)$ is strictly increasing on $[0,1]$, one can conclude that
 $\phi'(x) <0 $ on $[0,1/2)$ and $\phi'(x) >0$ on $(1/2, 1]$ which  contradicts the monotonicity of $\phi(x)$ on $[0,1]$.
\end{pf}
\begin{example} 
Consider $\phi(x) = x$. The corresponding ODE has a unique equilibrium $(k, 1/2-k)$ and  it is asymptotically stable ($\epsilon_1 = 2k - 1$, $\epsilon_2 = -2\beta$). Thus, regardless of the
choice of $\beta$ and the fraction of the class of conformists, $2k$, this model leads to balance within both classes. Therefore,  the whole group reaches balance.
\end{example}
\begin{example}
Let  $\phi(x) = x^2$. In this case, when  conformists are less than $50\% $ of the population, $2k < \frac{1}{2}$, the corresponding ODE has only one equilibria $(k, 1/2-k)$ and it is asymptotically stable $( \epsilon_1 = 2k-1/2)$. Thus, conformists reach balance as well as the spontaneous class, and hence the entire group reaches balance for large $N$ and large $t$. On the other hand, when conformists are more than $ 50\% $ of the population, $2k >\frac{1}{2}$, the equilibrium  $(k, 1/2-k)$ is unstable and the class of conformists reaches dominance of one opinion. In fact, the other two equilibria are $(k \pm \frac{\sqrt{4k-1}}{2}, 1/2-k )$ and 
 the stability analysis suggests that these equilibrium points are stable. Hence, the model leads to dominance for the whole group. One example is given in Fig.~\ref{fig:TwoTypes} where probabilities are computed using Monte Carlo simulations of $10.000$ trajectories. Here, the total number of agents is $N = 120$ with $N_1=100$ ($2k = 5/6$) being the population of the conformists where $\phi(x) = x^2 $ and spontaneity coefficient $\beta = 0.02$.
\begin{figure}
\begin{center}
\includegraphics[width =8.4cm]{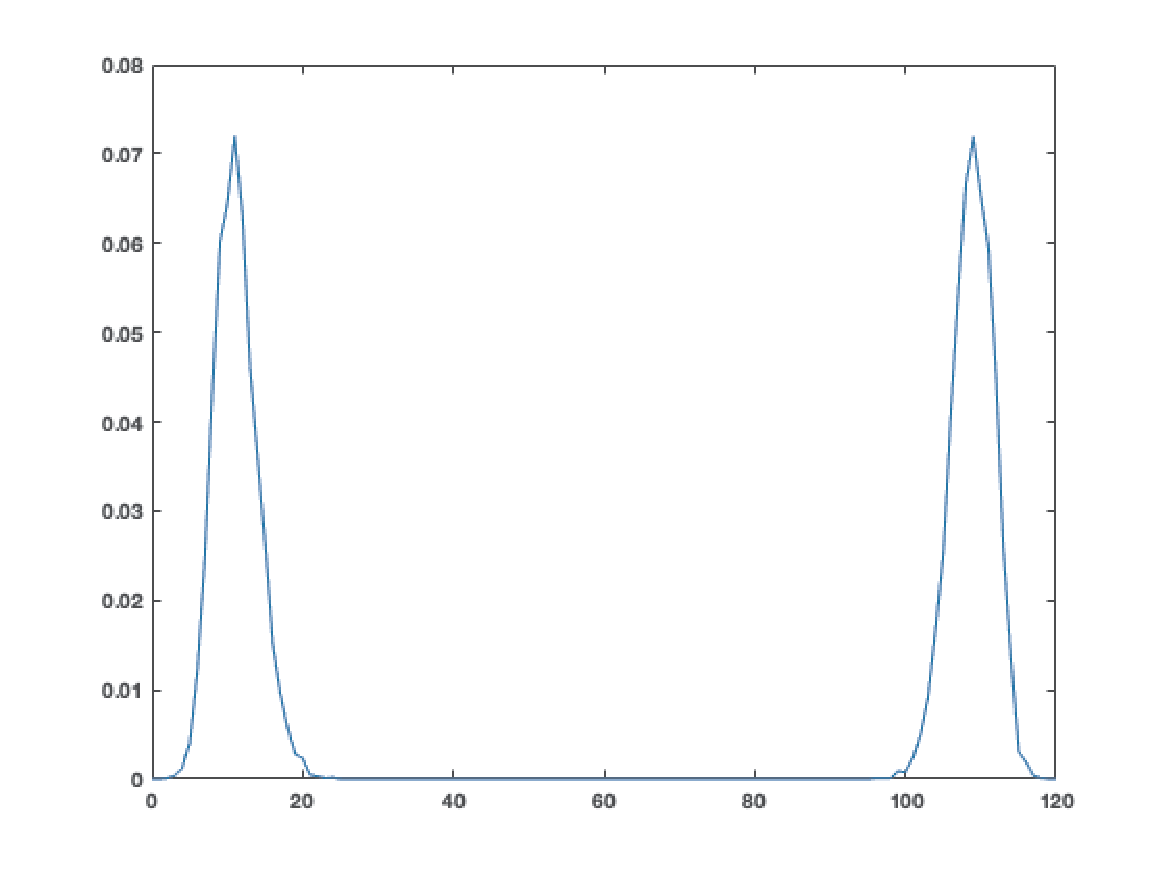} 
\caption{Empirical probability mass function for the number of agents with opinion 1 at time $T=100$ for the case of two extreme classes with $\phi(x)=x^2$ and  $\beta = 0.02$.}
\label{fig:TwoTypes}
\end{center}
\end{figure}
\end{example}

    \section{Conclusions}
\label{sec:conclusions}
We have proposed a simple binary model where agents hold an opinion from the
set $\{0,1\}$ at any time $t \geq 0$. An agent flips his/her opinion based on the opinion distribution of the entire group and his/her personality. We define personality of an agent by 
a monotonic conformity function $\phi$ and a spontaneity coefficient  $\beta
$. When all agents in the group share the same personality, we call the group
homogeneous. 

Initially, focusing on a homogeneous group, we analyzed the
long time probabilities for a large population size for different personality characteristics of the group. The question
of what personality characteristics lead to dominance of one opinion was
studied. We found that the shape of the conformity function, namely strict
convexity or lack thereof, seems to be an important determining factor in 
whether dominance of one opinion occurs for sufficiently small $\beta$.  

We extended  our model to a heterogeneous group, where the group consists of
different personality classes. In particular, when the group is formed by two
extreme classes, complete conformity and complete spontaneity, the dominance
of group opinion is analyzed. In this
example, we found that the fraction of the pure conformists was a key 
determining factor of dominance along 
with the strict convexity of $\phi$. 

    \bibliography{references}

\end{document}